\journal{ArXiv}
\titleclass{\subsubsubsection}{straight}[\subsection]
\newcounter{subsubsubsection}[subsubsection]
\renewcommand\thesubsubsubsection{\thesubsubsection.\arabic{subsubsubsection}}
\titleformat{\subsubsubsection}{\normalfont\normalsize\itshape}{\thesubsubsubsection.\space}{0em}{}
\titlespacing*{\subsubsubsection}{0pt}{2ex plus 1ex minus .2ex}{0.75ex plus .2ex}
\def\toclevel@subsubsubsection{4}
\def\l@subsubsubsection{\@dottedtocline{4}{7em}{4em}}
\newtheorem{Proposition}{Proposition}
\newtheorem{assumption}{Assumption}
\newtheorem{lemma}{Lemma}
\newif\ifblackandwhite
\def\BibTeX{{\rm B\kern-.05em{\sc i\kern-.025em b}\kern-.08em
    T\kern-.1667em\lower.7ex\hbox{E}\kern-.125emX}}
\begin{document}
\begin{frontmatter}








\title{Opinion Dynamics in Social Multiplex Networks with Mono and Bi-directional Interactions in the Presence of Leaders}


\author{Amirreza Talebi$^{a}$, Sayed Pedram Haeri Boroujeni$^{b}$, Abolfazl Razi$^{c}$}

\affiliation{organization={Department of Integrated Systems Engineering},
            addressline={Ohio State University}, 
            city={Columbus},
            postcode={43210}, 
            state={OH},
            country={USA}\\
            Email: talebi.14@osu.edu}

\affiliation{organization={School of Computing},
            addressline={Clemson University}, 
            city={Clemson},
            postcode={29632}, 
            state={SC},
            country={USA}\\
            Email: shaerib@g.clemson.edu}

\affiliation{organization={School of Computing},
            addressline={Clemson University}, 
            city={Clemson},
            postcode={29632}, 
            state={SC},
            country={USA}\\
            Email: arazi@clemson.edu}

\begin{abstract}
We delve into the dynamics of opinions within a multiplex network using coordination games, where agents communicate either in a one-way or two-way interactions, and where a designated leader may be present. By employing graph theory and Markov chains, we illustrate that despite non-positive diagonal elements in transition probability matrices or decomposable layers, opinions generally converge under specific conditions, leading to a consensus. We further scrutinize the convergence rates of opinion dynamics in networks with one-way versus two-way interactions. We find that in networks with a designated leader, opinions converge towards the initial opinion of the leader, whereas in networks without a designated leader, opinions converge to a convex combination of the opinions of agents. Moreover, we emphasize the crucial role of designated leaders in steering opinion convergence within the network. Our experimental findings corroborate that the presence of leaders expedites convergence, with mono-directional interactions exhibiting notably faster convergence rates compared to bidirectional interactions.
\end{abstract}

\begin{keyword}
Network optimization, Multiplex networks, Coordination games, Markov chains, Opinion dynamics, Graph theory.

\end{keyword}

\end{frontmatter}

\section{Introduction}

\subsection{Introduction}\label{intro:1}

Social networks play a crucial role in disseminating various types of information, including (mis)information, rumors, new technologies, and diseases \cite{kleinberg2007cascading}. Extensive research has focused on understanding opinion dynamics in single-layer networks, as evidenced by studies such as \cite{saber2003consensus,jadbabaie2003coordination,ghaderi2013opinion, ren2005consensus,sarlak2020approach,hendrickx2005convergence}.

In real-world scenarios, people engage across multiple communication platforms like Twitter, Facebook, Instagram, face-to-face interactions, among others. Each of these platforms can be viewed as a single-layer network or graph, where nodes represent individuals, and edges denote connections with varying weights reflecting the influence one places on the opinions of others. However, studying each isolated network fails to capture the true underlying network, comprised of multiple mono-layer networks and their inter-layer interactions, resulting in a loss of valuable information \cite{de2016physics}.

Therefore, for a comprehensive understanding of social networks, it is essential to consider their multi-layer nature \cite{de2016physics}. Multiplex networks, a subset of multi-layer networks, are characterized by nodes existing in multiple layers concurrently \cite{boccaletti2014structure}.

The dynamics of opinions within multi-layer networks can be viewed as coordination games. In this context, agents seek to optimize their payoffs, defined as minimizing the deviation from their opinions, by strategically selecting the best responses.
The exploration of coordination games on single-layer networks is well-documented in the literature, as seen in \cite{kleinberg2007cascading, ghaderi2013opinion,  fazeli2015duopoly}. Additionally, research has delved into coordination games on multi-layer networks, as evidenced by studies like \cite{gomez2012evolution, hu2017opinion, wang2015evolutionary}, with a comprehensive literature review available in \cite[p. 70]{boccaletti2014structure}.

These problems are also intricately linked to consensus problems where the agents need to agree on a common decision (opinion in this paper) with a rich body of literature such as \cite{jadbabaie2003coordination, blondel2005convergence, hendrickx2005convergence,  olshevsky2006convergence, olshevsky2009convergence}.

Furthermore, certain studies have delved into the notion of on and off layers within multi-layer networks, where specific layers are active during certain time steps while others remain inactive, as demonstrated in \cite{ding2017asynchronous, dong2017dynamics}.

In our paper, we formulated a model for coordination games on a two-layer multiplex network, incorporating the on and off layers concepts based on models from \cite{ghaderi2013opinion, ding2017asynchronous, dong2017dynamics} designed for single-layer networks.  We further examined the convergence and convergence rate of the model, considering the mono or bidirectional interactions between agents. This problem exhibits a myriad of applications across diverse fields, including medical networks (e.g., \cite{hamzehconceptual, hamzehnew}), neural networks (e.g., \cite{https://doi.org/10.13140/rg.2.2.29416.03849,soleymani_forecasting_2024,boroujeni2024ic, 9814765, yousefpour2023unsupervised}), meta-heuristic algorithms (e.g., \cite{Haeri_Boroujeni_2023,boroujeni2021data, 9721496, mehrabi2023efficient, sadralashrafi2018gardener}),  control theory and reinforcement learning (e.g., \cite{amirimargavi2023lowrank, jebellat2023reinforcement, amiri2023rank, 10129859}), etc. 

The remainder of the paper is organized as follows: a comprehensive literature review is presented first. Then, Section \ref{sec:model} introduces the model and preliminaries, followed by Section \ref{sec:Convergence}, which offers an analytical analysis of the convergence of the model to the equilibrium. Section \ref{sec:Convergence Rate} explores the convergence rate of the model, section \ref{sec: numerical} explains the numerical results, and the concluding section summarizes the findings.

\subsection{Related Work}\label{sec:rel}

In the domain of game-theoretic models, networked-coordination games \cite{kleinberg2007cascading} play a crucial role in elucidating how individuals in online settings become informed about new technologies, opinions, and rumors. While addressing the seeding problem, crucial in viral marketing \cite{fazeli2015duopoly},  \cite{kleinberg2007cascading} did not explore the convergence rate of information evolution. This aspect was later addressed by \cite{montanari2009convergence} in the context of a single-layer network, noting slow convergence in well-connected networks and faster convergence in smaller networks with poor connectivity. In contrast, \cite{ghaderi2013opinion} presented different results, revealing faster convergence in complete graphs compared to poorly connected graphs like rings.

In the multi-layer context, \cite{gomez2012evolution} delved into the dynamics of a coordination game on multiplex networks, highlighting enhanced resilience of cooperative behaviors compared to single-layer settings. Their investigation into a coordination game (prisoner's dilemma) on multiplex networks indicated that the reward for defecting influences cooperative behaviors differently than in single-layer settings. The reference \cite{hu2017opinion} explored the influence of multi-layer networks on opinion diffusion with and without stubborn agents, placing emphasis on strongly connected and closed agent sets leading to opinion convergence. Stubborn agents are those who never or hardly change their opinions.  

Consensus problems, dealing with group agreement, have undergone extensive scrutiny \cite{ren2005consensus, olfati2004consensus}. The reference \cite{olfati2004consensus} delved into dynamic agent consensus under various assumptions for directed and undirected graphs. The reference \cite{ren2005consensus} demonstrated consensus in a single directed network under network topology switching, provided the union of network topologies forms a spanning tree. References \cite{blondel2005convergence} and \cite{hendrickx2005convergence} scrutinized convergence rates of bidirectional equal neighbor models, in which agents in the underlying communication graph put equal weights on their neighboring agents' opinions, and infinite products of type symmetric stochastic matrices with positive diagonals. 

In the context of consensus problems, the study often involves products of stochastic matrices, tracing back to the works of \cite{wolfowitz1963products} and  \cite{hajnal1958weak}. The reference \cite{xia2015products} underscored that achieving consensus necessitates the left-sided product sequence of stochastic matrices to converge to a rank one matrix. Recent work by \cite{xia2018generalized} introduced type-1 and type-2 generalized Sarymsakov matrices, sets of stochastic matrices whose product of compact subsets results in a rank one matrix. Our unique contribution lies in demonstrating the convergence rate of these matrix multiplications under specific circumstances. Our methodology is rooted in a coordination game-based model within the realm of multi-layered networked systems and Markov chains.

In this paper, we are constructing a two-layer multiplex network where agents engage in interactions solely on one layer during odd time steps and on both layers during even time steps. Our findings indicate that consensus will eventually be achieved in the network. It is crucial to note the distinction between the convergence of opinions and consensus – while in consensus, all agents ultimately adopt the same opinion, convergence allows for the possibility that some agents may maintain different opinions.

In our model, we introduce a leader-follower structure, where an agent without influencers but influencing others is deemed a leader, and a follower is an agent influenced by the leader. Furthermore, we explore the convergence rates of networks with mono-directional or bidirectional interactions, both with and without leaders. To our knowledge, the model incorporating these specific elements and assumptions is unprecedented, and there is no well-established body of work studying such settings.

\section{Model and Preliminaries}\label{sec:model}

\subsection{The multiplex network}\label{sec:network-model}

We examine the convergence of opinions within a two-layer multiplex network consisting of $n$ agents present on both layers. The multiplex network is denoted as $\mathcal{M}=\{\mathcal{V},\mathcal{G}\}$, where $\mathcal{V}=\{v_1, \ldots, v_n\}$ represents the set of agents, and $v_i\in \mathcal{V}, i\in \mathcal{I}=\{1, \ldots, n\}$ denotes agent $i$. The network's layers are represented by the set of weighted, directed graphs $\mathcal{G}=\{\mathcal{G}{\alpha}=(\mathcal{V},\mathcal{E{\alpha}},\mathcal{W_{\alpha}})| \mathcal{E_{\alpha}}\subseteq \mathcal{V} \times \mathcal{V}, \mathcal{W_{\alpha}}: \mathcal{E}{\alpha}\rightarrow \mathbb{R}{\geq 0}\}, \forall \alpha \in \{1, 2\}$.

Additionally, the set of neighbors of agent $i$ in layer $\alpha$ is denoted as $\delta_{i\alpha} = \{v_j| (v_j,v_i)\in \mathcal{E}_{\alpha}, j\in \mathcal{I}\}$, indicating that an edge $(v_j, v_i)$ implies agent $i$'s opinion is influenced by agent $j$'s opinion. Thus, the set of neighbors for an agent encompasses all agents influencing the opinion of that specific agent. Moreover, in our model, the interactions between agents can be mono or bi-directional. In the case of mono-directional interactions, an agent's opinion is influenced by its neighbors, but it may not necessarily affect the opinions of its neighbors.

On a layer $\alpha$, a leader is identified as the agent $i$ for whom $\delta_{i\alpha} = \{v_i\}$, meaning the leader has no neighbors other than itself.
It's noteworthy that a leader in one layer may act as a follower in another layer. For example in Figure \ref{fig:111}, agent $A$ is the leader on the first layer, perhaps serving as the administrator of a channel on a social platform, while concurrently acting as a follower on the second layer.

\begin{figure}[H]
\centering
\tikzset{every picture/.style={line width=0.75pt}} 

\begin{tikzpicture}[x=0.75pt,y=0.75pt,yscale=-0.75,xscale=0.75]

\draw   (135,39) -- (381,39) -- (311,135) -- (65,135) -- cycle ;
\draw   (109,109.5) .. controls (109,102.6) and (114.6,97) .. (121.5,97) .. controls (128.4,97) and (134,102.6) .. (134,109.5) .. controls (134,116.4) and (128.4,122) .. (121.5,122) .. controls (114.6,122) and (109,116.4) .. (109,109.5) -- cycle ;
\draw   (234,59.5) .. controls (234,52.6) and (239.6,47) .. (246.5,47) .. controls (253.4,47) and (259,52.6) .. (259,59.5) .. controls (259,66.4) and (253.4,72) .. (246.5,72) .. controls (239.6,72) and (234,66.4) .. (234,59.5) -- cycle ;
\draw   (318,58.5) .. controls (318,51.6) and (323.6,46) .. (330.5,46) .. controls (337.4,46) and (343,51.6) .. (343,58.5) .. controls (343,65.4) and (337.4,71) .. (330.5,71) .. controls (323.6,71) and (318,65.4) .. (318,58.5) -- cycle ;
\draw   (165,66.5) .. controls (165,59.6) and (170.6,54) .. (177.5,54) .. controls (184.4,54) and (190,59.6) .. (190,66.5) .. controls (190,73.4) and (184.4,79) .. (177.5,79) .. controls (170.6,79) and (165,73.4) .. (165,66.5) -- cycle ;
\draw   (219,111.5) .. controls (219,104.6) and (224.6,99) .. (231.5,99) .. controls (238.4,99) and (244,104.6) .. (244,111.5) .. controls (244,118.4) and (238.4,124) .. (231.5,124) .. controls (224.6,124) and (219,118.4) .. (219,111.5) -- cycle ;
\draw    (134,108.5) -- (167.58,75.4) ;
\draw [shift={(169,74)}, rotate = 135.41] [color={rgb, 255:red, 0; green, 0; blue, 0 }  ][line width=0.75]    (10.93,-3.29) .. controls (6.95,-1.4) and (3.31,-0.3) .. (0,0) .. controls (3.31,0.3) and (6.95,1.4) .. (10.93,3.29)   ;
\draw    (188,75) -- (220.5,103.68) ;
\draw [shift={(222,105)}, rotate = 221.42] [color={rgb, 255:red, 0; green, 0; blue, 0 }  ][line width=0.75]    (10.93,-3.29) .. controls (6.95,-1.4) and (3.31,-0.3) .. (0,0) .. controls (3.31,0.3) and (6.95,1.4) .. (10.93,3.29)   ;
\draw    (259,59.5) .. controls (286.58,53.1) and (273.41,104.42) .. (247.68,73.47) ;
\draw [shift={(246.5,72)}, rotate = 52.07] [color={rgb, 255:red, 0; green, 0; blue, 0 }  ][line width=0.75]    (10.93,-3.29) .. controls (6.95,-1.4) and (3.31,-0.3) .. (0,0) .. controls (3.31,0.3) and (6.95,1.4) .. (10.93,3.29)   ;
\draw    (244,111.5) -- (319.27,68) ;
\draw [shift={(321,67)}, rotate = 149.98] [color={rgb, 255:red, 0; green, 0; blue, 0 }  ][line width=0.75]    (10.93,-3.29) .. controls (6.95,-1.4) and (3.31,-0.3) .. (0,0) .. controls (3.31,0.3) and (6.95,1.4) .. (10.93,3.29)   ;
\draw   (131,143) -- (377,143) -- (307,239) -- (61,239) -- cycle ;
\draw   (105,213.5) .. controls (105,206.6) and (110.6,201) .. (117.5,201) .. controls (124.4,201) and (130,206.6) .. (130,213.5) .. controls (130,220.4) and (124.4,226) .. (117.5,226) .. controls (110.6,226) and (105,220.4) .. (105,213.5) -- cycle ;
\draw   (230,163.5) .. controls (230,156.6) and (235.6,151) .. (242.5,151) .. controls (249.4,151) and (255,156.6) .. (255,163.5) .. controls (255,170.4) and (249.4,176) .. (242.5,176) .. controls (235.6,176) and (230,170.4) .. (230,163.5) -- cycle ;
\draw   (314,162.5) .. controls (314,155.6) and (319.6,150) .. (326.5,150) .. controls (333.4,150) and (339,155.6) .. (339,162.5) .. controls (339,169.4) and (333.4,175) .. (326.5,175) .. controls (319.6,175) and (314,169.4) .. (314,162.5) -- cycle ;
\draw   (161,170.5) .. controls (161,163.6) and (166.6,158) .. (173.5,158) .. controls (180.4,158) and (186,163.6) .. (186,170.5) .. controls (186,177.4) and (180.4,183) .. (173.5,183) .. controls (166.6,183) and (161,177.4) .. (161,170.5) -- cycle ;
\draw   (215,215.5) .. controls (215,208.6) and (220.6,203) .. (227.5,203) .. controls (234.4,203) and (240,208.6) .. (240,215.5) .. controls (240,222.4) and (234.4,228) .. (227.5,228) .. controls (220.6,228) and (215,222.4) .. (215,215.5) -- cycle ;
\draw    (130,212.5) -- (163.58,179.4) ;
\draw [shift={(165,178)}, rotate = 135.41] [color={rgb, 255:red, 0; green, 0; blue, 0 }  ][line width=0.75]    (10.93,-3.29) .. controls (6.95,-1.4) and (3.31,-0.3) .. (0,0) .. controls (3.31,0.3) and (6.95,1.4) .. (10.93,3.29)   ;
\draw    (184,179) -- (216.5,207.68) ;
\draw [shift={(218,209)}, rotate = 221.42] [color={rgb, 255:red, 0; green, 0; blue, 0 }  ][line width=0.75]    (10.93,-3.29) .. controls (6.95,-1.4) and (3.31,-0.3) .. (0,0) .. controls (3.31,0.3) and (6.95,1.4) .. (10.93,3.29)   ;
\draw    (240,215.5) -- (315.27,172) ;
\draw [shift={(317,171)}, rotate = 149.98] [color={rgb, 255:red, 0; green, 0; blue, 0 }  ][line width=0.75]    (10.93,-3.29) .. controls (6.95,-1.4) and (3.31,-0.3) .. (0,0) .. controls (3.31,0.3) and (6.95,1.4) .. (10.93,3.29)   ;
\draw    (314,162.5) -- (257,163.47) ;
\draw [shift={(255,163.5)}, rotate = 359.03] [color={rgb, 255:red, 0; green, 0; blue, 0 }  ][line width=0.75]    (10.93,-3.29) .. controls (6.95,-1.4) and (3.31,-0.3) .. (0,0) .. controls (3.31,0.3) and (6.95,1.4) .. (10.93,3.29)   ;
\draw    (134,108.5) .. controls (161.58,102.1) and (148.41,153.42) .. (122.68,122.47) ;
\draw [shift={(121.5,121)}, rotate = 52.07] [color={rgb, 255:red, 0; green, 0; blue, 0 }  ][line width=0.75]    (10.93,-3.29) .. controls (6.95,-1.4) and (3.31,-0.3) .. (0,0) .. controls (3.31,0.3) and (6.95,1.4) .. (10.93,3.29)   ;
\draw    (230,163.5) -- (131.79,212.61) ;
\draw [shift={(130,213.5)}, rotate = 333.43] [color={rgb, 255:red, 0; green, 0; blue, 0 }  ][line width=0.75]    (10.93,-3.29) .. controls (6.95,-1.4) and (3.31,-0.3) .. (0,0) .. controls (3.31,0.3) and (6.95,1.4) .. (10.93,3.29)   ;
\draw    (318,58.5) -- (192,66.38) ;
\draw [shift={(190,66.5)}, rotate = 356.42] [color={rgb, 255:red, 0; green, 0; blue, 0 }  ][line width=0.75]    (10.93,-3.29) .. controls (6.95,-1.4) and (3.31,-0.3) .. (0,0) .. controls (3.31,0.3) and (6.95,1.4) .. (10.93,3.29)   ;
\draw    (314,162.5) -- (188,170.38) ;
\draw [shift={(186,170.5)}, rotate = 356.42] [color={rgb, 255:red, 0; green, 0; blue, 0 }  ][line width=0.75]    (10.93,-3.29) .. controls (6.95,-1.4) and (3.31,-0.3) .. (0,0) .. controls (3.31,0.3) and (6.95,1.4) .. (10.93,3.29)   ;

\draw (115,100) node [anchor=north west][inner sep=0.75pt]   [align=left] {\scriptsize{A}};
\draw (172,57) node [anchor=north west][inner sep=0.75pt]   [align=left] {\scriptsize{B}};
\draw (241,50) node [anchor=north west][inner sep=0.75pt]   [align=left] {\scriptsize{C}};
\draw (324,49) node [anchor=north west][inner sep=0.75pt]   [align=left] {\scriptsize{D}};
\draw (226,103) node [anchor=north west][inner sep=0.75pt]   [align=left] {\scriptsize{E}};
\draw (111,204) node [anchor=north west][inner sep=0.75pt]   [align=left] {\scriptsize{A}};
\draw (168,161) node [anchor=north west][inner sep=0.75pt]   [align=left] {\scriptsize{B}};
\draw (237,154) node [anchor=north west][inner sep=0.75pt]   [align=left] {\scriptsize{C}};
\draw (320,153) node [anchor=north west][inner sep=0.75pt]   [align=left] {\scriptsize{D}};
\draw (222,207) node [anchor=north west][inner sep=0.75pt]   [align=left] {\scriptsize{E}};
\draw (115,66) node [anchor=north west][inner sep=0.75pt]   [align=left] {{\tiny Layer 1}};
\draw (115,165) node [anchor=north west][inner sep=0.75pt]   [align=left] {{\tiny Layer 2}};

\end{tikzpicture}
\caption{An example of a two-layer multiplex network: Agent $A$ assumes a leadership role on Layer 1 but not on Layer 2. The neighbors of agent $B$ include agents $A$ and $D$ on both layers.
\label{fig:111}
}
\end{figure}
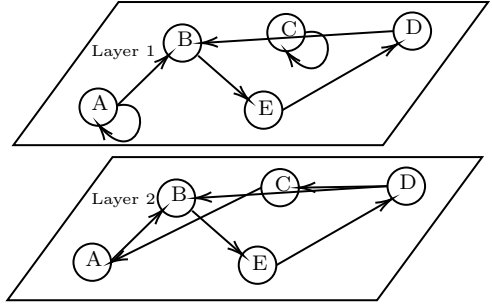

\subsection{The adjacency matrices} 

In this paper, agents participate in interactions over an infinite, discrete time horizon indexed by $t$. We assume that layer 1 is consistently active, implying continuous influence among agents at every time step. In contrast, layer 2 becomes active exclusively during even time steps, precisely at instances where $t\in \{2, 4, 6, \ldots\}$. This modeling decision allows for the depiction of scenarios where an agent frequently communicates with friends, while interactions with their superior occur infrequently. This situation is analogous to cases where network topology experiences temporal changes.

We demonstrate the weighted adjacency matrix $\mathcal{A}_\alpha$ of a single layer $\alpha$ by
\begin{align}\label{eqn:ad1}
\mathcal{A}_{\alpha}= \begin{cases}
[\mathcal{A}_{\alpha}]_{ij}= \frac{1}{|\delta_{i\alpha}|},& \text{$\forall j \in \delta_{i\alpha}, \forall i \in \mathcal{I},$}\\
[\mathcal{A}_{\alpha}]_{ij}=0, & \text{$\forall j \notin \delta_{i\alpha}, \forall i \in \mathcal{I}$}
\end{cases}
\end{align}
where $[\mathcal{A}_{\alpha}]_{ij}$ is the $ij^\textsuperscript{th}$ element of matrix $\mathcal{A}_{\alpha}$. To clarify, all the neighbors of an agent exert equal influence on the agent's opinion. Additionally, an agent with only a self-loop in the network indicates that the agent is both its own neighbor and a leader, influencing only its own opinion.

 The adjacency matrix of the Multiplex network will be defined as follows: 
  \begin{equation}
    A(t)=
    \begin{cases}
      \frac{1}{2}(\mathcal{A}_1+\mathcal{A}_2), &  \text{if}\ t\equiv 0\ \mod\ 2\\
      \mathcal{A}_1, & \text{Otherwise.}
    \end{cases}\label{eqn:m2}
  \end{equation} 
  
In words, the agent puts equal weight on the importance of both layers when they are both active. 
Moreover, matrices $\mathcal{A}_1$, $\mathcal{A}_2$ and $A(t)$, are row stochastic matrices i.e., $\sum_{j}[\mathcal{A}_1]_{ij} = \sum_{j}[\mathcal{A}_2]_{ij} = \sum_{j}[A(t)]_{ij} =1, \ \forall i,t$. 

\subsection{Agents' opinion dynamics}\label{sec:opinion-model}

We represent the opinion of agent $i$ at time step $t$ as $x_i(t) \in [0,10]$, and denote the vector of opinions of all agents at time step $t$ as $\mathbf{x}(t) = [x_1(t), \ldots, x_n(t)]^\top$. The initial opinions profile of the agents is denoted by $\mathbf{x}(0)$.

We extend the coordination game proposed in \cite{ghaderi2013opinion} for single-layer networks to a multiplex network. In this game, an agent updates its opinion at time $t$ to minimize the following cost function:
\begin{equation}\label{eqn:cost}
    J_i(\mathbf{x}(t))= 
    \begin{cases}
    \frac{1}{2}\sum_{j\in\delta_{i\alpha}}(x_i(t)-x_j(t))^2, \forall \alpha\in \{1,2\},\text{ if }
    t\equiv 0 \mod\ k\\
    \\
    \frac{1}{2}\sum_{j\in \delta_{i1}}(x_i(t)-x_j(t))^2,\ \text{Otherwise}. \\
    \end{cases}
\end{equation}

In simple terms, agent $i$ seeks to minimize the difference between its opinion and that of its neighbors. The provided cost function reflects how this set of neighbors changes based on the active layers.

By employing the first-order condition $\frac{d J_i(\mathbf{x}(t))}{d x_i}=0$, the best response strategy of agent $i$ at time $t+1$ is given by:
\begin{align}\label{eqn:opinion-dynamics}
    \mathbf{x}(t+1) = A(t)\mathbf{x}(t)
\end{align}
 Note that this model is very well-known in the literature (e.g., see \cite{jadbabaie2003coordination, ren2005consensus, hendrickx2005convergence, olshevsky2009convergence}). Mostly it is called the linearized version of the Vicsek model, \cite{Vicsek}. However, we study this model under special settings. 
We make the following assumptions for our Multiplex network.
\begin{assumption}\label{as:strongly-connected}
\begin{enumerate}
    \item When both layers of the multiplex network are active, there is at most one leader, and the union of both layers forms a rooted directed spanning tree, where the leader serves as the root. If the multiplex network has no leader, every communication class in the first layer contains at least one node with a self-loop, and there is a spanning tree rooted in a node. 
    \item Every isolated agent, meaning every agent without any neighbors, possesses a self-loop in the multiplex network.
\end{enumerate}
\end{assumption}
In graph theory, a directed spanning tree rooted in a vertex of a directed graph is a directed tree where all the vertices are reachable from the root vertex (see \cite{bapat2010graphs}). Furthermore, we define a communication class as a set of agents in which a directed path exists from each agent to any other within the set. This consideration is based on the underlying subgraph of the respective layer in the multiplex network, where the set of agents comprises the set of nodes of the subgraph.

It is worth noting that \cite{ren2005consensus} assumed that the union of the graphs contains a spanning tree. In our work, we relax this condition by allowing layer 1 to be decomposable, and the presence of self-loops is only mandatory under specific conditions. In contrast, \cite{ren2005consensus} assumed that all nodes in a graph must have a self-loop. The absence of a self-loop for a node allows for the possibility of fully open-minded or fully submissive agents, meaning the agent places no weight on its opinion. Additionally, they established convergence under the assumption that the left matrix production sequence, i.e., the production of adjacency matrices of switching layers, corresponds to a stochastic, irreducible, and indecomposable matrix (SIA), whereas our work considers decomposable structures. Furthermore, the (row) stochastic matrix $A$ is defined by \cite{wolfowitz1963products} to be \emph{SIA} if there exists a rank one matrix $B$ such that:
    \[B = \lim_{n\rightarrow \infty} A^n~.\]

\section{Convergence of the Opinion Profile}\label{sec:Convergence}

In this section, we demonstrate that the opinion dynamics described by \eqref{eqn:opinion-dynamics} will lead to convergence and consensus. We establish the convergence for two scenarios: one where there is one leader in the union of both layers, and another where there are no leaders in the union of both layers. It's important to note that each layer can have multiple leaders, but our analysis is conditioned on the union of both layers due to Assumption \ref{as:strongly-connected}.
It is impossible to have several leaders on the union of both layers, as an agent cannot hold multiple opinions simultaneously. Consequently, consensus will not be reached in such a scenario.

We begin by considering the case where there is a leader in the union of the layers. 
\begin{proof}
Firstly, we show that $A(2)\mathcal{A}_1$ contains a directed spanning tree rooted in the leader. In the context of our Multiplex network, $A(2)$ is the adjacency matrix at time step 2 when both layers are active, representing the connections between agents. This implies the existence of a leader at $t=2$. The matrix $A(2)\mathcal{A}_1$ maintains a record of walks of length two. This entails starting from a node in the graph associated with the adjacency matrix $A(2)$, traversing an edge, and then, from the subsequent node on that edge, traversing another edge on the underlying graph of $\mathcal{A}_1$. The walk concludes at a node from the underlying graph of the adjacency matrix $\mathcal{A}_1$.

\begin{figure}[ht]
\centering
\input{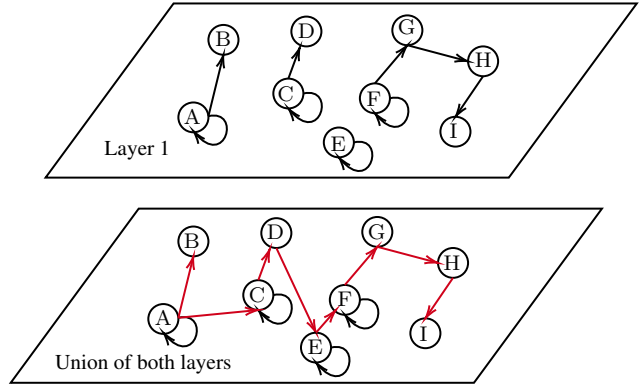}
\caption{An example of a two-layer multiplex network at $t=2$ when there is a leader in the union of layers. 
\label{fig:2}
}
\end{figure}

Without loss of generality, the example in Figure \ref{fig:2} encompasses all possible scenarios for the first layer. The nodes $A, C$, and $F$ act as leaders, node $E$ is isolated, and there are disconnected components. Let us focus on node $A$ in the union layer as we want to show there is a directed spanning tree rooted in the leader, $A$. The first walk could be $A\rightarrow A$ in the union layer, followed by $A\rightarrow B$ in the first layer. This establishes a walk of length two from $A$ to $B$. In the next walk, $A\rightarrow A$ in the union layer is followed by $A\rightarrow C$ in the first layer, which connects $A$ to $C$. Subsequently, $C\rightarrow C$ in the union layer, and then $C\rightarrow D$ in the first layer connects $C$ to $D$. This indeed connects $A$ to $D$ as well. Employing similar reasoning, $D$ is connected to $F$, $F$ to $H$ and $G$, $C$ to $E$, $E$ to $G$ and $F$, and $G$ to $I$. Thus, a spanning tree rooted in $A$ exists for the underlying graph of $A(2)\mathcal{A}_1$. Notably, since the sole leader always has a self-loop per the first assumption, there is no possibility of a bipartite structure, i.e., nodes cannot be separated into two disjoint sets. For example, having $\{A,D,F,H\}$ and $\{C,E,G,I,B\}$ such that there is no path from the first set to the other is not feasible due to the self-loop of the leader, not to mention other self-loops. From the standpoint of Markov chains, $A$ represents the only absorbing state, meaning it is a state in the Markov chain from which one cannot transition to another state (see \cite{levin2017markov}). All other states are transient. To illustrate the transition diagram, one simply needs to reverse all the arrows. This implies that the final opinions of all agents should converge to that of the leader.

Continuing with the scenario where there is no leader in the union of the layers, Figure \ref{fig:3} illustrates an example of this case. 

\begin{figure}[h]
\centering
\input{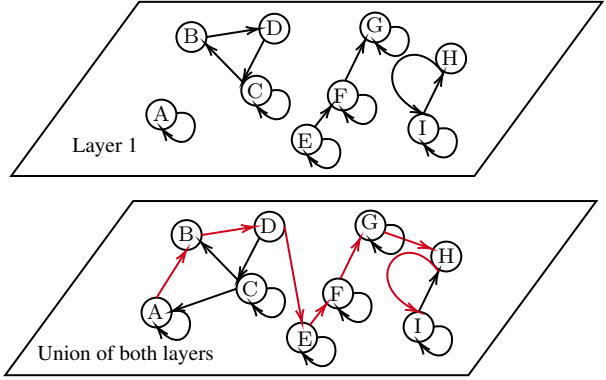}
\caption{An example of a two-layer multiplex network at $t=2$ without any leader in the network. 
\label{fig:3}
}
\end{figure}
In this scenario, we encounter a directed cycle instead of a leader. For instance, in Figure \ref{fig:3}, observe the cycle $A\rightarrow B\rightarrow D\rightarrow C\rightarrow A$. The leader's role in the previous case was crucial as its self-loop allowed us to reach nodes with both odd and even distances (odd or even numbers of edges to be traversed) from the leader, thereby avoiding a bipartite structure and ensuring that there is a directed spanning tree rooted in the leader. When there is no leader in the union of both layers, we are compelled to have a directed cycle passing through all the nodes of the communication class of the underlying graph of the union of both layers. Otherwise, it implies that there is a node without an inward edge from another node, suggesting that the agent is a leader leading to a contradiction.

In case there is a node with no neighbors, it has a self-loop due to the Assumption \ref{as:strongly-connected}. A directed cycle with at least one node having a self-loop ensures the existence of walks of both odd and even lengths from a node to any other node in the cycle, or the communication class, exemplified by the $A\rightarrow B\rightarrow D\rightarrow C \rightarrow A$ cycle in the above figure. Assumption \ref{as:strongly-connected} necessitates such settings.

Assuming this directed cycle is a leader node (it functions like a leader node), the rest of the proof follows the pattern established in the previous case. Previously, we demonstrated that the Markov chain associated with $A(2)\mathcal{A}_1$ for the example in Figure \ref{fig:2} (when the orientation of the edges is flipped to the opposite direction, note that we do not transpose this matrix since it would not be a transition matrix; the flipping of the edges' direction is due to the definition of neighbors) has a sink, an absorbing state which is the leader. Additionally, in the example of Figure \ref{fig:3}, we will have only one absorbing communication class, i.e., if we flip the direction of the edges in the union of layers, $\{A,B,C,D\}$ would be the absorbing class. Otherwise, it would be a contradiction since there is a directed spanning tree in the underlying Markov Chain of $A(2)\mathcal{A}_1$, implying that there is a path from one absorbing class to another. 

Formally, we denote the final profile of opinions as follows using Markov chain techniques (see \cite{bapat2010graphs,levin2017markov}):

When there is a leader, the transition probability matrix of the underlying $A(2)\mathcal{A}_1$ Markov chain in a canonical shape, partitioning and rearranging the transition probability matrix to four sub-matrices, is: 

\begin{equation}\label{eqn:4}
T_1 = 
\begin{pmatrix}
$1$ &\vline& \mathbf{0}\\
\hline
\mathbf{R} &\vline& \mathbf{Q}\\
\end{pmatrix}
\end{equation}
Where $\mathbf{R}$ corresponds to the transitions toward the absorption state, the leader's state; $\mathbf{Q}$ is the transitions of the transient states; $1$ corresponds to the leader's transition to itself, and $\mathbf{0}$ is an all-zero row vector compatible with the dimension of $\mathbf{Q}$. Obviously, $\mathbf{Q}$ is a substochastic matrix as $T_1$ is a transition probability matrix. Hence, if $\mathbf{1}$ is a column vector of all ones compatible with the dimension of $\mathbf{R}$, we will have:

\begin{align}\label{eqn:5}
\lim_{t\rightarrow \infty} (A(2)\mathcal{A}_1)^t\mathbf{x}(0) = \lim_{t\rightarrow \infty} T_1^t \mathbf{x}(0) =
\begin{pmatrix}
$1$ & \vline & \mathbf{0} \\
\hline
\mathbf{R}_t & \vline & \mathbf{Q}^t \\
\end{pmatrix}\mathbf{x}(0)
\end{align}

However, since $\mathbf{Q}$ corresponds to the transient states and is a substochastic matrix, $\lim_{t\rightarrow \infty}\mathbf{Q}^t = 0$. Also, since $T_1$ is a stochastic matrix, its powers also correspond to stochastic matrices. Thus, $\lim_{t\rightarrow \infty} \mathbf{R}_t = \mathbf{1}$ implying that all agents will have the initial opinion of the leader.

As for the second case, instead of the leader, we have a communication class of size greater than one with sub-matrix $\mathbf{P}$ and the canonical form of the transition matrix is as follows: 
\begin{equation}\label{eqn:6}
T_2 =
\begin{pmatrix}
\mathbf{P} &\vline& \mathbf{0}\\
\hline
\mathbf{R} &\vline& \mathbf{Q}\\
\end{pmatrix}
\end{equation}
Matrix $\mathbf{P}$ corresponds to an SIA matrix since it is associated with a strongly connected sub-graph or a communication class, making it an ergodic chain per se \cite{hajnal1958weak}, and there is a self-loop in the sub-graph due to Assumption \ref{as:strongly-connected}. It turns out that the limit of higher powers of this matrix is of the below form (see \cite{levin2017markov}):
\begin{align}\label{eqn:7}
    \lim_{t\rightarrow \infty}\mathbf{P}^t = \mathbf{1}\pi^\top
\end{align}
Where $\pi$ is the associated limiting distribution of the Markov chain with transition probability matrix $\mathbf{P}$.

Also, $\lim_{t\rightarrow \infty} \mathbf{Q}^t=\mathbf{0}_Q$ since $\mathbf{Q}$ is a substochastic matrix corresponding to transient states transitions and $\mathbf{0}_Q$ is a matrix of all-zero elements compatible with the size of sub-matrix $\mathbf{Q}$.
Since there is only one absorbing class, the probability of ending up in the absorbing class from any transient state will be one. Hence, the probability of ending up in a specific state of the absorbing class is its corresponding stationary distribution element, denoted as $[\pi]_i$. Thus, if matrix $\mathbf{R}$ has dimensions $m\times n$, where $m$ is the number of transient states and $n$ is the number of absorbing states, we can express the convergence as follows:
\begin{align}\label{eqn:8}
    \lim_{t\rightarrow \infty}(A(2)\mathcal{A}_1)^t\mathbf{x}(0) = 
    \begin{pmatrix}
\mathbf{1}\pi^\top &\vline& \mathbf{0}\\
\hline
\mathbf{1}^{m\times 1}\pi^\top &\vline& \mathbf{0}_Q\\
\end{pmatrix}\mathbf{x}(0)
\end{align}
This means that opinions of all the agents will finally converge to a convex combination of opinions of agents in the absorbing class of the Markov chain associated with transition probability matrix $T_2$. It is worth mentioning that these proofs also stand for the case when we have bi-directional interactions

\end{proof}

\section{Convergence Rate}\label{sec:Convergence Rate}

In this section, we delve into the analysis of the convergence rate of the opinion dynamics model given by Equation \eqref{eqn:opinion-dynamics}, with a primary focus on insights from \cite{blondel2005convergence,olshevsky2009convergence}.

\begin{lemma}[\cite{blondel2005convergence}]\label{lem:lemma2}
The convergence rate of the opinion dynamics, characterized by the weighted adjacency matrix $C$, can be described as follows:
$$\|\mathbf{x}(t) - \mathbf{\bar{x}}\|_{\infty}\leq 2Uq^t\|\mathbf{x}(0)\|_2$$
Here, $\mathbf{\bar{x}}$ is the final converged opinion profile, $U>0$ is a constant, and $q$ represents the joint spectral radius of the multiplication of adjacency matrices of the layers at different time steps. 
\end{lemma}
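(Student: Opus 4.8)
The plan is to reduce the rate estimate to a statement about contraction of the alternating matrix product on the subspace transverse to consensus, and then read the rate off the joint spectral radius. First I would write the trajectory as a left product $\mathbf{x}(t) = \Phi(t)\mathbf{x}(0)$ with $\Phi(t) = A(t-1)A(t-2)\cdots A(0)$, and recall from the convergence theorem that $\bar{\mathbf{x}}$ is a consensus vector, i.e. $\bar{\mathbf{x}} = c\,\mathbf{1}$ for some scalar $c$ (the leader's opinion, or the convex combination $\pi^\top\mathbf{x}(0)$ in the leaderless case). Since every $A(t)$ is row-stochastic, $A(t)\mathbf{1} = \mathbf{1}$, hence $A(t)\bar{\mathbf{x}} = \bar{\mathbf{x}}$, and the error $\mathbf{y}(t) := \mathbf{x}(t) - \bar{\mathbf{x}}$ obeys the same recursion $\mathbf{y}(t+1) = A(t)\mathbf{y}(t)$, so that $\mathbf{y}(t) = \Phi(t)\mathbf{y}(0)$.

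Next I would pass to the spread seminorm $\Delta(\mathbf{v}) = \max_i v_i - \min_i v_i$, which vanishes exactly on multiples of $\mathbf{1}$ and therefore descends to a genuine norm on the quotient $\mathbb{R}^n/\mathrm{span}\{\mathbf{1}\}$. Because $A(t)\mathbf{1} = \mathbf{1}$, each $A(t)$ induces a well-defined linear map $\tilde A(t)$ on this quotient, and row-stochasticity makes each $\tilde A(t)$ non-expansive for $\Delta$. The only two matrices occurring are $\mathcal{A}_1$ (odd steps) and $\tfrac12(\mathcal{A}_1+\mathcal{A}_2)$ (even steps), so the admissible products $\tilde\Phi(t)$ are the alternating products of the induced maps; I define $q$ to be their joint spectral radius. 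By the defining property of the joint spectral radius, for this $q$ there is a constant $U>0$ with $\|\tilde\Phi(t)\| \le U q^t$ for all $t$, where $\|\cdot\|$ is the operator norm induced by $\Delta$ on the quotient.

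Finally I would assemble the bound. Since $\bar{\mathbf{x}}$ lies in the convex hull of the coordinates of $\mathbf{x}(t)$ (the dynamics preserve that hull), $c$ lies between $\min_i x_i(t)$ and $\max_i x_i(t)$, which gives $\|\mathbf{x}(t) - \bar{\mathbf{x}}\|_\infty \le \Delta(\mathbf{x}(t))$. Applying the quotient estimate and then $\Delta(\mathbf{x}(0)) \le 2\|\mathbf{x}(0)\|_\infty \le 2\|\mathbf{x}(0)\|_2$ yields
$$\|\mathbf{x}(t)-\bar{\mathbf{x}}\|_\infty \le \Delta(\mathbf{x}(t)) = \Delta(\Phi(t)\mathbf{x}(0)) \le U q^t\,\Delta(\mathbf{x}(0)) \le 2U q^t\|\mathbf{x}(0)\|_2,$$
which is the claim, the factor $2$ arising precisely from bounding $\Delta$ by twice the sup-norm of the initial profile.

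The main obstacle I anticipate is the translation between the two languages: the convergence theorem delivers only a rank-one limit through the canonical absorbing/transient Markov-chain decomposition, whereas the rate requires a genuine \emph{uniform} contraction factor. Making this rigorous means (i) identifying the correct consensus-transverse quotient and checking that the induced alternating products are exactly the objects whose joint spectral radius is $q$, and (ii) justifying the uniform constant $U$: strictly, $\|\tilde\Phi(t)\|\le Uq^t$ holds for any $q$ above the joint spectral radius, and one must argue, using that convergence already forces $q<1$ together with the fact that only finitely many distinct matrices alternate periodically, that $q$ may be taken to be the joint spectral radius itself. The periodicity of the switching is what keeps the set of admissible products finite and makes the joint spectral radius the sharp exponential rate.
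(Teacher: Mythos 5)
The paper never proves this lemma: it is imported as a black box from \cite{blondel2005convergence}, so there is no in-paper argument to compare yours against. Judged on its own, your reconstruction is the standard argument underlying the cited result, and its skeleton is sound: since every $A(t)$ is row-stochastic, $A(t)\mathbf{1}=\mathbf{1}$, so the error $\mathbf{y}(t)=\mathbf{x}(t)-\bar{\mathbf{x}}$ obeys the same recursion; the spread seminorm $\Delta$ descends to a norm on $\mathbb{R}^n/\mathrm{span}\{\mathbf{1}\}$ on which each $A(t)$ acts non-expansively; $\|\mathbf{x}(t)-\bar{\mathbf{x}}\|_\infty\leq\Delta(\mathbf{x}(t))$ holds because the consensus value stays in the shrinking interval $[\min_i x_i(t),\max_i x_i(t)]$; and $\Delta(\mathbf{x}(0))\leq 2\|\mathbf{x}(0)\|_\infty\leq 2\|\mathbf{x}(0)\|_2$ produces the constants. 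You also quietly repair a defect in the statement itself: the joint spectral radius of the stochastic matrices as given is exactly $1$ (they all fix $\mathbf{1}$, and all products have $\infty$-norm $1$), so the lemma only has content when $q$ is the joint spectral radius of the \emph{induced} maps on the consensus-transverse quotient, which is precisely what you use and what the cited reference intends.

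The one genuine gap is the step you flag yourself: upgrading ``for every $q$ strictly above the joint spectral radius there is a $U$'' to the bound with $q$ equal to the joint spectral radius. Your proposed fix---periodic switching means the admissible products are, up to a bounded prefix, powers of the single quotient matrix induced by $A(2)\mathcal{A}_1$, so the rate is sharp---does not close it. Even for a single matrix $M$, the estimate $\|M^k\|\leq U\rho(M)^k$ fails whenever an eigenvalue of maximal modulus is defective; in general one only gets $\|M^k\|\leq U\,k^{m-1}\rho(M)^k$ with $m$ the size of the largest Jordan block, and nothing in Assumption \ref{as:strongly-connected} excludes a nontrivial Jordan block at the second-largest eigenvalue of $A(2)\mathcal{A}_1$. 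So you must either state the conclusion for every $q$ strictly greater than the joint spectral radius (with $U$ depending on $q$), or add a non-defectiveness/extremal-norm hypothesis under which the radius itself is attained. In fairness, this imprecision is present in the lemma as the paper states it and arguably in the informal literature it echoes, but a proof cannot inherit it silently, and the ``finitely many matrices, periodic switching'' remark is not a substitute for the missing hypothesis.
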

However, Lemma \ref{lem:lemma2} provides effective bounds  when the multiplication of the adjacency matrices corresponds to a SIA matrix. This scenario does not align with our examples, and our goal is to refine the bounds for cases where the multiplication does not correspond to an SIA matrix. 
When expressing $A(2)\mathcal{A}_1$ in canonical form, the convergence rate is constrained by the convergence rate of the corresponding $Q$ component matrix in the presence of a leader in the union of layers. As matrix $Q$ converges to an all-zero matrix due to transient transitions, the convergence rate of $A(2)\mathcal{A}_1$ is analogous to the convergence rate of matrix $Q$ towards an all-zero matrix. According to the Perron-Frobenius theorem, the spectral radius ($\rho$) of a strictly sub-stochastic matrix is less than one. We define matrix $A$ as a strictly sub-stochastic matrix if $\sum_j [A]_{ij} < 1$ for all $j$, and $[A]_{ij} \geq 0$ for all $i$ and $j$.

 Therefore, the convergence rate in this case is defined as follows, drawing on \cite{blondel2005convergence, olshevsky2009convergence}, \cite[p. 341]{horn2012matrix}, with some necessary modifications:

\begin{Proposition}

The convergence rate of the opinion dynamics \eqref{eqn:opinion-dynamics} is as follows:
\begin{align}
    \|\mathbf{x}(t)-\mathbf{\bar{x}}\|_{\infty} \leq 2U\|\mathcal{A}_1^{t-2\lfloor \frac{t}{2}\rfloor}\|_1q^{\lfloor \frac{t}{2}\rfloor}\|x(0)\|_2
\end{align}

\end{Proposition}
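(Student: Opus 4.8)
The plan is to exploit the two--periodic structure of the switching sequence: odd time steps apply $\mathcal{A}_1$ and even time steps apply $A(2)=\tfrac12(\mathcal{A}_1+\mathcal{A}_2)$, so that every pair of consecutive steps realizes the single combined matrix $A(2)\mathcal{A}_1$ already analyzed in Section~\ref{sec:Convergence}. First I would record the product form of the dynamics. Iterating \eqref{eqn:opinion-dynamics} and collecting the steps in blocks of two yields
\begin{equation}
\mathbf{x}(t)=\mathcal{A}_1^{\,t-2\lfloor t/2\rfloor}\,(A(2)\mathcal{A}_1)^{\lfloor t/2\rfloor}\,\mathbf{x}(0),
\end{equation}
where the exponent $t-2\lfloor t/2\rfloor\in\{0,1\}$ is simply $t\bmod 2$, accounting for a single leftover $\mathcal{A}_1$ factor when $t$ is odd.

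Next I would observe that the limiting profile $\mathbf{\bar{x}}$ produced in \eqref{eqn:5} and \eqref{eqn:8} is a constant (consensus) vector, hence a common fixed point of every row--stochastic matrix in play: $\mathcal{A}_1\mathbf{\bar{x}}=\mathbf{\bar{x}}$ and $(A(2)\mathcal{A}_1)\mathbf{\bar{x}}=\mathbf{\bar{x}}$. Subtracting $\mathbf{\bar{x}}$ from the product form and using this fixed--point property gives
\begin{equation}
\mathbf{x}(t)-\mathbf{\bar{x}}=\mathcal{A}_1^{\,t-2\lfloor t/2\rfloor}\big[(A(2)\mathcal{A}_1)^{\lfloor t/2\rfloor}\mathbf{x}(0)-\mathbf{\bar{x}}\big].
\end{equation}
Taking the $\infty$--norm and applying submultiplicativity isolates the prefactor $\|\mathcal{A}_1^{\,t-2\lfloor t/2\rfloor}\|_1$ in front of the error of the even subsequence $\mathbf{x}(2\lfloor t/2\rfloor)$.

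It then remains to bound $\|(A(2)\mathcal{A}_1)^{\lfloor t/2\rfloor}\mathbf{x}(0)-\mathbf{\bar{x}}\|_\infty$ at the geometric rate $q^{\lfloor t/2\rfloor}$. Here I would invoke Lemma~\ref{lem:lemma2} with the combined matrix $C=A(2)\mathcal{A}_1$, but routed through its canonical decomposition \eqref{eqn:4}/\eqref{eqn:6}: the discrepancy from the limit lives entirely in the transient block, whose powers are governed by $\mathbf{Q}^{\lfloor t/2\rfloor}$. Since $\mathbf{Q}$ is strictly substochastic, the Perron--Frobenius theorem gives $\rho(\mathbf{Q})<1$, and the spectral--radius estimate (Gelfand's formula) yields $\|\mathbf{Q}^{k}\|\le C'q^{k}$ with $q=\rho(\mathbf{Q})$. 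Absorbing $C'$ together with all norm--equivalence constants (between $\|\cdot\|_\infty$, $\|\cdot\|_1$ and $\|\cdot\|_2$) into the single constant $U$ then delivers the claimed inequality.

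The main obstacle is this last step: making the passage from Lemma~\ref{lem:lemma2}, stated for the SIA (rank--one--limit) case, to the genuinely non--SIA combined matrix rigorous. Two points need care. First, the rate $q$ must be identified with the spectral radius of the transient block $\mathbf{Q}$ rather than with a joint spectral radius, and when $\mathbf{Q}$ is not diagonalizable the bound $\|\mathbf{Q}^{k}\|\le C'q^{k}$ acquires a polynomial--in--$k$ prefactor that must be swallowed by $U$ (or handled by replacing $q$ with any $q'\in(\rho(\mathbf{Q}),1)$). Second, one must verify that the $\mathbf{\bar{x}}$ extracted from the even subsequence coincides with the limit of the full interleaved sequence; this follows from the fixed--point identity $\mathcal{A}_1\mathbf{\bar{x}}=\mathbf{\bar{x}}$ used above, which guarantees that inserting the leftover odd step neither shifts the limit nor changes the rate, contributing only the harmless factor $\|\mathcal{A}_1^{\,t-2\lfloor t/2\rfloor}\|_1$.
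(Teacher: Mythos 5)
Your proposal is correct and follows essentially the same route as the paper's own proof: the same pairing of steps into $\mathcal{A}_1^{\,t-2\lfloor t/2\rfloor}(A(2)\mathcal{A}_1)^{\lfloor t/2\rfloor}$, the same use of row-stochasticity ($\mathcal{A}_1\mathbf{1}=\mathbf{1}$) to commute the leftover odd factor past the limit term, the same submultiplicative split producing the $\|\mathcal{A}_1^{\,t-2\lfloor t/2\rfloor}\|_1$ prefactor, and the same identification of the geometric rate with the transient block $\mathbf{Q}$ of the canonical form via Perron--Frobenius (you are in fact more explicit than the paper about the fixed-point identity and the Jordan/Gelfand prefactor absorbed into $U$). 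The one point the paper adds that your argument should too: in the leaderless case \eqref{eqn:6} the discrepancy does \emph{not} live entirely in the transient block, since $\mathbf{P}^{k}$ has not yet reached $\mathbf{1}\pi^{\top}$, so $q$ must be taken as the maximum of $\rho(\mathbf{Q})$ and the modulus of the second-largest eigenvalue of $\mathbf{P}$.
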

\begin{proof}

 \begin{align}\label{eqn:cont2}
    &\|\mathbf{x}(t)-\bar{\mathbf{x}}\|_{\infty} = \nonumber\\
    &\|A(t)...A(1)\mathbf{x}(0)-\mathbf{1}\pi^\top\mathbf{x}(0)\|_{\infty}=\nonumber\\
    &\|\mathcal{A}^{t-2\lfloor\frac{t}{2}\rfloor}_1 (A(2)\mathcal{A}_1)^{\lfloor\frac{t}{2}\rfloor}\mathbf{x}(0)-\mathcal{A}^{t-2\lfloor\frac{t}{2}\rfloor}_1\mathbf{1}\pi^\top\mathbf{x}(0)\|_{\infty}\leq\nonumber\\
    &\|\mathcal{A}^{t-2\lfloor\frac{t}{2}\rfloor}_1\|_{1}\|(A(2)\mathcal{A}_1)^{\lfloor\frac{t}{2}\rfloor}\mathbf{x}(0)-\mathbf{1}\pi^\top\mathbf{x}(0)\|_{\infty}\leq \nonumber\\
    & 2U\|\mathcal{A}_1^{t-2\lfloor \frac{t}{2}\rfloor}\|_1q^{\lfloor \frac{t}{2}\rfloor}\|x(0)\|_2
 \end{align}
Where $\|.\|_1,\ \|.\|_2$ and $\|.\|_{\infty}$ shows the $\ell^1-$norm, $\ell^2-$norm and $\infty-$norm of a matrix respectively (see \cite{horn2012matrix}), $U$ is a positive constant, $q$ is the largest eigenvalue in the modulus of matrix $Q$ other than one.  
Also, $\bar{\mathbf{x}}$ is the ultimate profile of opinions. Note that in the proof we can write $\lim_{t\rightarrow \infty} A(t)...A(1) = \mathbf{1}\pi^{\top}$ as finally, we will have a matrix of rank one from the left production of the adjacency matrices. 

The aforementioned rationale applies similarly to the scenario where there is no leader, with the addition that matrix associated with component $P$ must also converge. In this case, $q$ would be the maximum of the largest eigenvalues of both matrices $P$ and $Q$ other than one.
\end{proof}
\section{Numerical Results}\label{sec: numerical}

In this section, we will simulate the opinion dynamics for the examples depicted in Figure \ref{fig:2} and Figure \ref{fig:3}, analyzing their convergence rate and final opinion profiles. Additionally, we will compare the convergence rates with their fully bidirectional counterparts.
In both examples, we consider the initial opinion profile $\mathbf{x}(0) = [4.74, 0.11, 1.14, 3.39, 1.16, 2.36, 0.47, 2.23, 4.92]$, randomly generated from a uniform distribution.

\begin{figure}[H]
    \centering
    \includegraphics[width=1\textwidth]{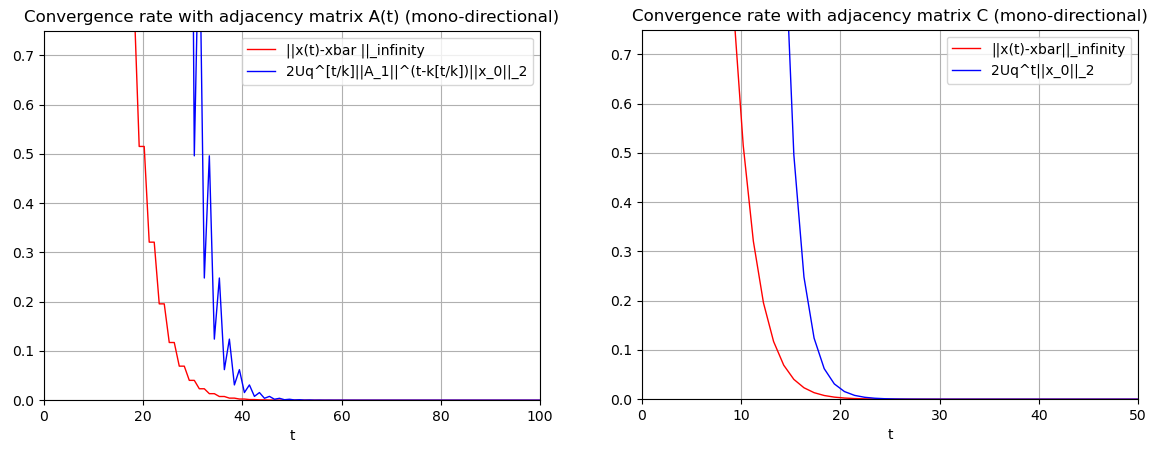}\hfill
    \caption{ Convergence rate for example 1 network topology with mono-directional interactions}
    \label{fig: 4}
\end{figure}

In Figure \ref{fig: 4}, the leftmost diagram corresponds to the network in example one with the adjacency matrix $A(t)$. On the right side, the diagram represents the same network but with a time-invariant adjacency matrix $C = A(2)\mathcal{A}_1$. By considering this time-invariant matrix, we can observe the changes influenced by the concept of active and inactive layers. Both diagrams indicate that after 40 iterations, the gap between the upper bound and the actual opinion profile difference in the final opinion profile is negligible. It's important to note that $C$ itself corresponds to two iterations. Furthermore, all agents' opinions have converged to $4.74$, which is the initial opinion of the leader, aligning with our previous analysis.

\begin{figure}[H]
    \centering
    \includegraphics[width=1\textwidth]{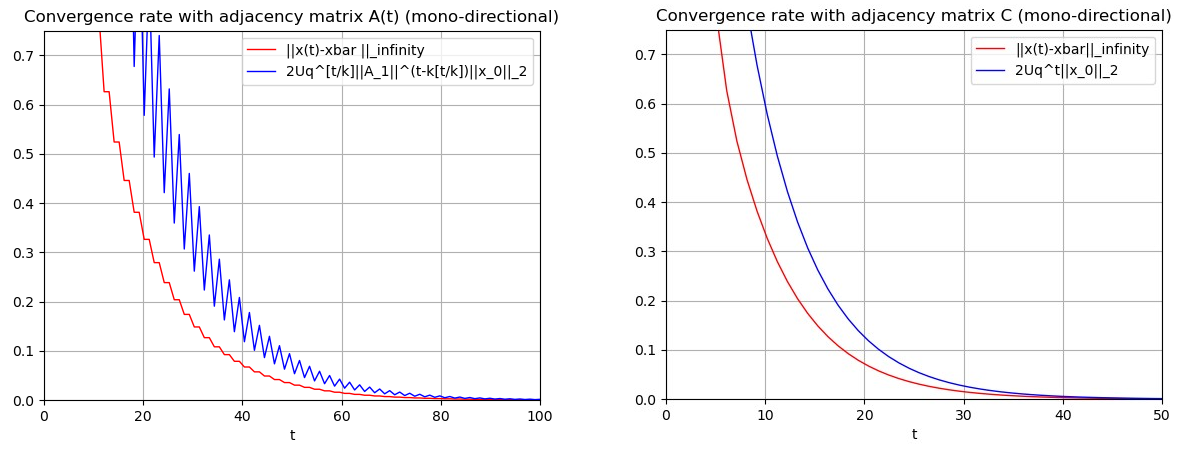}\hfill
    \caption{ Convergence rate for example 2 network topology with mono-directional interactions}
    \label{fig: 5}
\end{figure}

In Figure \ref{fig: 5}, a notable observation is that it takes longer to converge compared to the previous case where we had a leader in the network. However, it's important to note that we cannot definitively conclude this, as the network topologies differ. Factors such as connectivity play a pivotal role in determining the convergence rate. Additionally, the opinions of all agents converge to 2.1, and consensus has been reached. From this figure and the previous one, we observe that our upper bounds are, at the very least, successful in approximating the actual patterns.

\begin{figure}[H]
    \centering
    \includegraphics[width=1\textwidth]{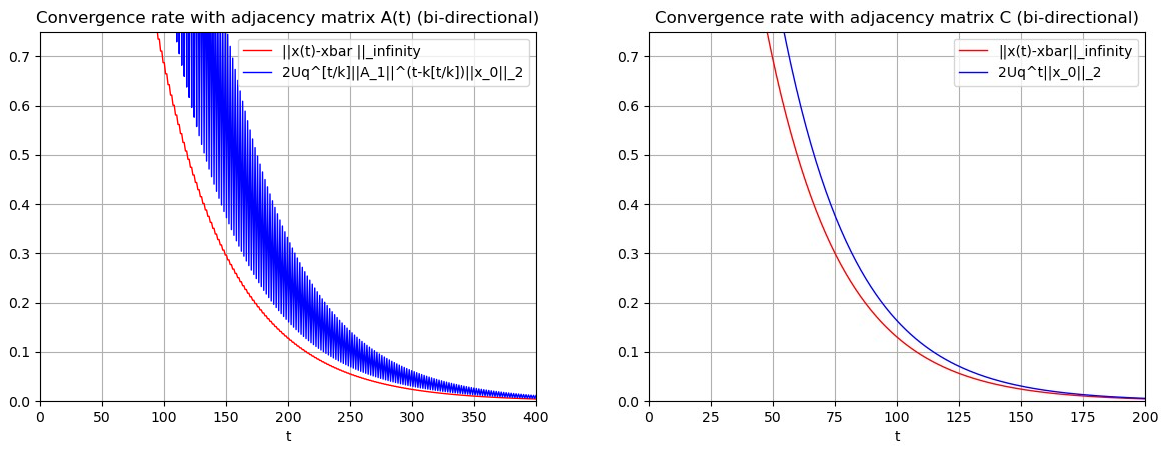}\hfill
    \caption{ Convergence rate for example 1 network topology with bi-directional interactions}
    \label{fig: 6}
\end{figure}

As evident from the diagrams in Figure \ref{fig: 6}, the convergence time for the bidirectional counterpart of example 1 is longer compared to its one-directional counterparts. This could be logical in some sense since, in the bidirectional version, each agent's opinion affects its neighbors' opinions, leading to consensus over an extended period with more communications between agents. It's worth noting that the communications were not bidirectional with the leader. Additionally, all the agents' opinions converged to that of the leader, 4.74.

\begin{figure}[H]
    \centering
    \includegraphics[width=1\textwidth]{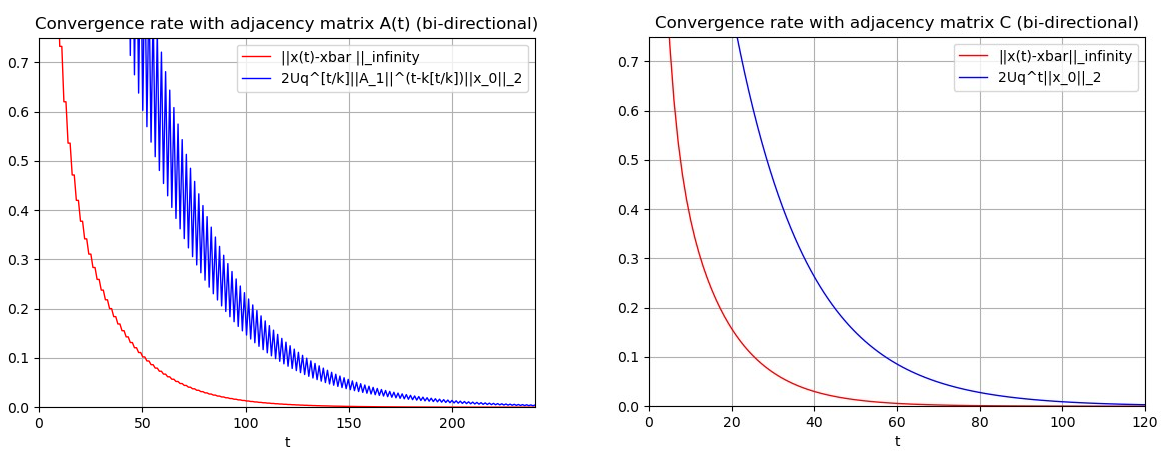}\hfill
    \caption{ Convergence rate for example 2 network topology with bi-directional interactions}
    \label{fig: 7}
\end{figure}

Finally, Figure \ref{fig: 7} demonstrates that convergence takes longer in the bidirectional version of example two. However, it seems that the difference is not as significant as that of example one. Again, all agents succeed in reaching consensus with a final opinion of 2.44.

\section{Conclusions}\label{sec:Conclusions}

In conclusion, our investigation into the dynamics of opinions in a two-layer multiplex network, considering factors such as layer switching, non-negative diagonal elements in adjacency matrices, and decomposable structures, has provided valuable insights. The presence of a leader and mono-directional interactions among agents has been shown to facilitate consensus under certain conditions.

Our study has extended the coordination game model to a multiplex network context, shedding light on the impact of active and inactive layers on the convergence of opinions. The analysis of convergence rates, aided by upper bounds derived from existing research, has revealed that bidirectional interactions can introduce delays in achieving consensus.

Furthermore, our research contributes to the broader field of consensus problems, offering novel perspectives on networks with changing topological properties and addressing scenarios with non-negative diagonal elements. The upper bounds established for convergence rates provide a valuable framework for understanding the dynamics of opinion evolution in complex, multiplex networks.

Overall, our findings emphasize the intricate interplay between network structure, layer dynamics, and agent interactions, paving the way for future exploration in this evolving research domain. As we delve deeper into the complexities of multiplex networks, our study serves as a foundation for understanding the nuances of opinion dynamics and consensus formation in dynamic, real-world scenarios.

To identify potential directions for further advancements in this project, two noteworthy avenues emerge. Firstly, extending the proposed approach to networks with more than two layers could unveil additional complexities and insights into the dynamics of opinion evolution. Exploring the dynamics in other types of multilayer networks, such as interconnected networks, represents a promising frontier for comprehensive analysis.

Furthermore, incorporating stubborn agents into the network or introducing randomness in the layer-switching process could introduce realistic elements, reflecting the intricacies of real-world scenarios. These variations may offer a more nuanced understanding of opinion convergence and consensus formation in dynamic multiplex networks.

\section{Declaration of competing interest}

The author affirms that there are no identifiable conflicting financial interests or personal associations that might have seemed to impact the work described in this article.

\section{Data and Code availability}

Data and code are available upon  request. 

\bibliographystyle{elsarticle-num} 
\bibliography{references}
\end{document}